\newtheorem{lemma}{Lemma}
\newtheorem{remark}{Remark}
\begin{document}
%
\title{Optimal Energy-Delay in Energy Harvesting Wireless Sensor Networks with Interference Channel}
%
%
%
\author{Dongbin Jiao,~
        Liangjun Ke,~
        Shengbo Liu,~and~Felix T.S. Chan
\thanks{D. Jiao and L. Ke
are with the School of Electronic and Information Engineering and State Key Laboratory for Manufacturing Systems Engineering, Xi'an Jiaotong University, Xi'an, 710049, China
(e-mail: dbjiao@stu.xjtu.edu.cn; keljxjtu@xjtu.edu.cn).}
\thanks{S. Liu is with the School of Information Science and Engineering, Key Laboratory of Underwater Acoustic Communication and Marine Information Technology Ministry of Education, Xiamen University, Xiamen, 361005, China (e-mail: liushb@stu.xmu.edu.cn).}
\thanks{F. Chan is with the Department of Industrial and Systems Engineering, The Hong Kong Polytechnic University, Hung Hom, Hong Kong, China (e-mail: f.chan@polyu.edu.hk).}
}

\maketitle
\begin{abstract}
In this work, we investigate the capacity allocation problem in the energy harvesting wireless sensor networks (WSNs) with interference channel. For the fixed topologies of data and energy, we formulate the optimization problem when the data flow remains constant on all data links and each sensor node harvests energy only once in a time slot. We focus on the optimal data rates, power allocations and energy transfers between sensor nodes in a time slot. Our goal is to minimize the total delay in the network under two scenarios, i.e., no energy transfer and energy transfer. Furthermore, since the optimization problem is non-convex and difficult to solve directly. By considering the network with relatively high Signal-to-Interference-plus-Noise Ratio (SINR), the non-convex optimization problem can be transformed into a convex optimization problem by convex approximation. We attain the properties of optimal solution by Lagrange duality and solve the convex optimization problem by CVX solver. The experimental results demonstrate that the total delay of the energy harvesting WSNs with interference channel is more than that in the orthogonal channel; and the energy transfer can help to decrease the total delay. Moreover, we also discuss the extension of our work.
\end{abstract}

\begin{IEEEkeywords}
Energy harvesting, energy transfer, wireless sensor networks, interference channel, convex approximation, capacity assignment problem, Lagrange duality.
\end{IEEEkeywords}

%
\IEEEpeerreviewmaketitle

\section{Introduction}\label{intro}
\IEEEPARstart{E}{nergy}  harvesting is a promising solution to provide self-sustain ability  and extend the lifetime for energy-limit wireless sensor networks (WSNs) \cite{ulukus2015energy}. Thus it has attracted much attention from researchers in recent years. However, energy harvesting process from the natural environment is instable, due to the time change of the day, the season or other factors. Wireless energy transfer (WET) as a friendly means of compensating energy, can transfer energy from some energy-rich sensor nodes to  others with energy-hungry so as to enhance the overall network performance \cite{8269105}. Meanwhile, due to the broadcast nature of wireless communications, the data signals of simultaneous transmissions can not avoid to interfere with each other in the same frequency band. As a result, it decreases the network performance.

Because of these considerations, we investigate the energy harvesting WSNs and concentrate on the delay minimization problem of the WSNs with interference channel. The delay of every data link is determined by the information rate on the link, which is monotonically decreased as the rate of the link for the fixed data flow over it \cite{bertsekas1992data}. The information rate is monotonically increasing in SINR. We focus on the capacity assignment problem which is similar to reference \cite{bertsekas1992data}. In particular, compared with the special case, in which information and energy transfer channels are orthogonal to each other \cite{gurakan2016optimal}, we consider the general case of communication model. In other words, the data transmission channels are interfered with each other. This is a more realistic and meaningful model of the capacity assignment problem.

Therefore, by considering the energy consumption and power allocation for the fixed data flow, we formulate the capacity assignment problem in the energy harvesting WSNs with interference channel as a non-convex optimization problem, which is constrained by data flow conservation conditions, information rate requirements, energy and power consumption. Employing the relatively high SINR, the non-convex optimization problem can be transformed into a convex optimization problem by convex approximation in "log-sum-exp" form \cite{boyd2004convex}. The solution properties of transformed capacity allocation problem is derived by Lagrange duality. Then it is available to search the optimal Lagrange multiplier and obtain the optimal solution to minimize total delay for the energy harvesting WSNs with interference channel in a time slot. Finally, we solve the approximate convex problem by CVX solver \cite{grant2008cvx}.

Our study is related to and based on the previous classical works on capacity allocation problem in communication networks \cite{bertsekas1992data}. In \cite{xiao2004simultaneous}, the simultaneous routing and resource allocation (SRRA) is investigated. A capacitated multi-commodity flow model is used to describe the data flows in the wireless networks. The optimization problem is solved by the dual-decomposition method. A general flow-based analytical framework is presented in \cite{xi2008node}. In order to balance aggregate user utility, total network cost, power control, rate allocation, routing, and congestion control are jointly optimized in wireless networks. However, the previous classical works have not considered the energy harvesting and energy cooperation. Reference \cite{fouladgar2013information} investigates the optimization problem of simultaneous information and energy flows in graph-based communication networks with energy transfer. Though references \cite{gurakan2016optimal} and \cite{xu2017optimization}  study the optimization problem of the joint information transmission and energy transfer, they neglect the interference among the data flow signals. These motivate us to consider a general capacity assignment problem which is to minimize total delay in the energy harvesting WSNs with interference channel.

It is worth noting that although we utilize a similar mathematical approach to that in \cite{gurakan2016optimal} for modeling and solving the capacity assignment problem, our study is significantly different from the previous studies: the previous studies only consider a special case where the data transmission channels are orthogonal to each other, rather than consider the impact of data transmission interference. However, the more realistic case is that data transmission channels are interfered with each other, which is one of the critical issues to be tackled in this study. Therefore, we need to remodel the capacity assignment problem for the energy harvesting WSNs with interference channel  in a time slot.

In this paper, our main contributions are as follows:
\begin{itemize}
  \item We investigate a general and meaningful model of capacity assignment problem in the energy harvesting WSNs with interference channel.
  \item Considering relatively high SINR, we transform the non-convex optimization problem into a convex one by convex approximation, and also derive the optimal solution properties by Lagrange duality.
  \item Numerical results show that the interference signals significantly affect the network performance; the energy transfer can help to decrease the total network delay.
\end{itemize}

The rest of this paper is structured as follows. Section \ref{system-model} introduces the network model and problem formulation. Section \ref{CAP-single-time-slot} investigates capacity assignment problem with interference channel in a time slot. Section \ref{performance results} demonstrates the performance results. Finally, Section \ref{conclusion} concludes the paper.

\section{System Model and Problem Formulation} \label{system-model}
In this study, each sensor node not only has the capability of harvesting energy and sensing data from the ambient environment, but it also can transmit or receive energy and data. As the data transmission channels are interfered with each other, the interference signals among the data flows may be unavoidable. Hence, we consider an energy harvesting WSNs model with interference channel.

Let $G = (V,E)$ be a directed and connectivity graph modeling $N$ sensor nodes which are placed randomly and seamlessly in a certain area. The vertices set $V$ = \{$v_0, v_1, \ldots, v_N$\} is composed of one sink node and $N$ sensor nodes. The edges set $E$ is composed of the communication links between the sensor nodes, i.e., $(v_i, v_j)\in E$, if and only if a node $v_i$ can send a message to a node $v_j$ with the power constraint $p_{ij}$.

A \emph{data collection tree} $T = (V_T,E_T)$ \cite{imon2015energy} is constructed for the energy harvesting WSNs with sink $v_0$ at level 0 as shown in Fig.  \ref{Interference-model}. It is an acyclic spanning subgraph of $G = (V,E)$ where $V_T = V$ and $E_T \subseteq E$. In the data collection tree $T$, each sensor node $v_n$ can collect the sensing data from the area of interest and then store it for future transmission in a data buffer. Each sensor node $v_n$ has to send the sensing data to sink $v_0$ periodically in multi-hop fashion and \emph{half-duplex} mode under interference channel. Sensor nodes $v_i$ and $v_j$ are siblings if they have the same parent.
Note that a sensor node can be either a transmitter, a relay or a receiver, which is determined by its location in WSNs. For brevity, the ordered pair $(v_i, v_j)$ is replaced by $(i,j)$ in the following sections. Throughout the paper, we denote sensor node indices by the first subscripts $i$, $j$ and $n$. The subscript $i$ and $j$ denote the start node and the end node at each link (i.e., data link and energy link), respectively.

\subsection{Network Data Flow Model} \label{Network-Data-Flow-Model}
Let us denote the data link $(i, j)$ as $l \in 1 \ldots L$ \footnote{The data link can be denoted $(i,j)$ or $l$, they can be interchangeable in this paper.}. The topology of data flows can be described by an $N\times L$ matrix $\mathbf{A}$. The entries of matrix $\mathbf{A}$ can be defined by $a_{nl}$, which is incident with sensor node $n$ and data link $l$. More precisely, each entry $a_{nl}$ is defined as
\begin{equation}
    a_{nl} =
   \begin{cases}
   1,  &\mbox{if $n = i$} \\
   -1, &\mbox{if $n = j$} \\
   0,  &\mbox{otherwise}.
   \end{cases}
\end{equation}
Let us define $\mathcal{I}_d(n)$ as the set of incoming data links to sensor node $v_n$ and $\mathcal{O}_d(n)$ as the set of outgoing data links from sensor node $v_n$, respectively. Assume that the data flow $d_l$ on each data link follows the uniform distribution $U(0,1]$. The set of data flows $\{d_{l} | l \in E_T \}$ is referred to as the $L$-dimensional flow vector. The divergence vector $\mathbf{s}$ associated with the data flow vector $\mathbf{d}$ is an $N$-dimensional vector which indicates the nonnegative amount of outside data flow injected into the sensor node $v_n$. Suppose that the data flow is lossless over links. For every sensor node $v_n$, the flow conservation conditions can be expressed as
\begin{equation}
s_n = \sum_{{l}\in \mathcal{O}_d(n)}d_{l} - \sum_{{l}\in \mathcal{I}_d(n)}d_{l}, \ \ \forall n \in V_T.
\end{equation}
The data flows conservation through the total WSNs can be rewritten as
\begin{equation}
\mathbf{Ad^T = s}.
\end{equation}
Moreover, the data flow $d_{l}$  over each data link $l$ can't exceed the information carrying capacity $c_{l}$, i.e.,
\begin{equation}
d_{l} \leq c_{l}, \ \ \forall l \in E_T.
\end{equation}
\subsection{Network Energy Flow Model} \label{Network-Energy-Flow-Model}
In this section, we present the energy model for the case where each sensor node has a single energy harvest in a time slot.
\subsubsection{Energy Harvesting Model}
Each sensor node powered can harvest energy from the ambient environment. Since the transmission consumption is the most significant amount of energy, we only account for energy consumption of transmitting data in this study. It is assumed that the energy harvesting sensor node has a capacity battery $B_{max}$ which is large enough. The capacity of storage is considered to be constant, i.e., energy outage and circuitry cost are negligible. Since energy harvesting sources are with random nature, the energy arrivals are considered as an independent and identically distributed ($i.i.d.$) Poisson distribution $\mathbf{P}(\lambda)$ with parameter $\lambda$ \cite{huang2013spatial,adu2018energy}. We assume that the energy arrivals occur only once in a time slot. Let $E_n$ denote the harvested energy of a sensor node $v_n$ in a time slot, $E_n\in (0, B_{max}]$. The harvested energy in a time slot can be exploited only in a later time slot.

\subsubsection{Energy Cooperation Model} \label{energy-cooperation-model}
Energy cooperation depends upon the statistics of the energy harvesting and the energy consumption of the sensor nodes. In general, for a sensor node $v_n$, the more data flow is transmitted, the more energy is required. In order to replenish energy of energy-hungry sensor nodes, the technique of wireless energy cooperation \cite{gurakan2013energy} is adopted in our study. It is assumed that the energy is \emph{ unidirectionally} transferred from the sensor node $v_i$ to the sensor node $v_j$ in a time slot, the transfer efficiency is $\eta_{ij}$, $\eta_{ij} \in (0,1]$, due to energy loss in transmission and conversion.
\subsubsection{Energy Flow Model}
In the previous analysis, we utilize $N$-dimensional vector $\mathbf{E}$ to present the harvested energy vector for the WSNs. In energy transfer process, the wireless energy links are similar to data links. The wireless energy link $q$ is also denoted as an ordered pair $(i,j)$ in energy routing. The energy can be sent from the sensor node $v_i$ to the sensor node $v_j$ over energy link $q$, $q \in 1 \ldots Q$, if the energy of the sensor node $v_j$ is not enough energy to operate. The energy transfer efficiency is $\eta_q$ on each energy link $q$ where $\eta_q \in (0,1]$. It implies that $\delta_i$ amount of energy is transferred on wireless energy link $q$ from the sensor node $v_i$ to the sensor node $v_j$; and the sensor node $v_j$ receives $\eta_q \delta_i$ amount of energy. The request of energy transfer is known in advance whereas the amount of transferred energy is unknown. The topology of energy flow can be denoted by an $N\times Q$ matrix $\mathbf{B}$. The entries of the matrix $\mathbf{B}$ can be defined by $b_{nq}$, which is incident with sensor node $n$ and wireless energy link $q$. More specifically, each entry $b_{nq}$ can be described as
\begin{equation}
    b_{nq} =
   \begin{cases}
   1,  &\mbox{if $n = i$} \\
   -\eta, &\mbox{if $n = j$} \\
   0,  &\mbox{otherwise}.
   \end{cases}
\end{equation}
We define $\mathcal{O}_q(n)$ and $\mathcal{I}_q(n)$ as the set of outgoing and incoming wireless energy links at the sensor node $v_n$, respectively. The variable $x_{q}$ is the amount of energy transferred. Let vector $\mathbf{x}$ be the $L$-dimensional energy flow vector.

\subsection{Communication Model} \label{Communication-Model}
For the energy harvesting WSNs with interference channel, we focus on minimizing the total delay and enhancing the network performance in order to ensure that sensing data on each data link can reach the sink as quickly as possible. It is similar to \cite{bertsekas1992data,gurakan2016optimal}, we assume that each time slot is large enough and the delay on the data link $l$ follows the $M/M/1$ queueing model in this work. It can be defined as
\begin{equation} \label{data-delay}
D_{l} = \frac{d_{l}}{c_{l} - d_{l}},
\end{equation}
where $d_{l}$ is the amount of data flow and $c_{l}$ is the information carrying capacity of communication link $l$ in which $d_{l} \leq c_{l}, \forall l \in E_T$.

In this study, we consider a tree-based energy harvesting WSNs with interference channel.
As shown in Fig.  \ref{Interference-model}, there are only 5 active links at the first time slot since we employ \emph{half-duplex} sensor nodes. Meanwhile, the network has 5 energy cooperation links,  which can transfer energy to sensor nodes required in order to guarantee that the sensing data can be successfully sent to the receivers at the time slot. In Fig. \ref{Interference-model}, we assume that the active link $l_8$ is the primary link, the receiver $v_3$ not only receives the data flow signal from the transmitter $v_8$, but also receives the interference signals from other transmitters $v_1$, $v_9$, $v_{12}$ and $v_{13}$. The interference signals are represented by red dashed lines with arrows. Meanwhile, the sensor node $v_7$ can transfer energy to the sensor node $v_8$ through the energy link $q_{14}$.
At the same time, other receivers also receive interference signals from active links transmitters except themselves. For brevity, we do not label them in Fig. \ref{Interference-model}.
\begin{figure}[htb]
\begin{center}
$\begin{array}{l}
\includegraphics[width=3.0in]{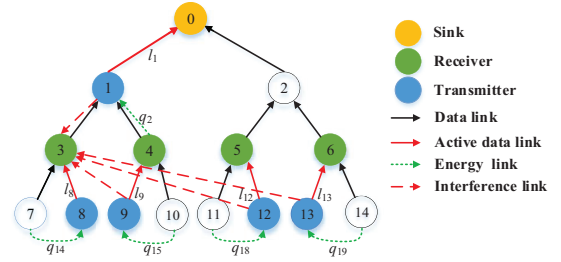}\\
 \end{array}$
\end{center}
\caption{Interference channel model of data flows with half-duplex mode.} \label{Interference-model}
\end{figure}
Hence, the data flow signals generate link interference to each other. The wireless interference signals degrade the information rate of data links and lead to greater delay in the network.

The baseband complex channel coefficient which remains constant from sensor node $v_i$ to sensor node $v_j$ is denoted by $h_{ij}$. The channel gain matrix $\mathbf{G}$ is defined by $G_{ij} = \|h_{ij}\|^2$,
which is dependent on various factors such as path loss, shadowing and fading effects. The diagonal entries $G_{ll}$ are gains of primary links, and the off-diagonal entries $G_{\bar{l}l} (\bar{l}\neq l)$ are interference gains among active data links.
Thus, the received SINR of data link $l$ is
\begin{equation} \label{Rij}
SINR_{l}(\mathbf{p}) = \frac{G_{ll} p_{l}}{\sum_{\bar{l} \neq l}G_{\bar{l}l} p_{\bar{l}} + \sigma_{l}},
\end{equation}
where $p_{l}$ denotes the depleted power which transmits data flow signal from the sensor node $v_i$ to the sensor node $v_j$ in a time slot, with channel grain $G_{ll}$ and channel noise power $\sigma_{l}$ \cite{fu2010fast}. For notational simplicity, we employ  $\mathbf{p} = \{p_{l}|l \in E_T \}$ as transmission power vector. In this paper, the power and energy can be interchangeable in a unit of time slot.

According to the Shannon formula, the information
carrying capacity (or information rate) $c_{l}$ of data link $l$ can be expressed as
\begin{equation} \label{shanon-fomula}
c_{l} = \frac{1}{2}\log(1 + SINR_{l}(\mathbf{p})),
\end{equation}
where all logarithms in our study are taken to the base $e$.

At every sensor node $v_n$, the total power depleted \footnote{In contrast to transmission power consumption, the energy consumption of sensing data is ignored in our study.} on transmission data link $l$ and energy link $q$ are constrained by the usable energy as:
\begin{equation}
\sum_{{l}\in \mathcal{O}_d(n)}p_{l}  \leq E_n + \sum_{{q}\in \mathcal{I}_q(n)} \eta_q x_q, \forall n\in V_T. \label{energy availability constraints}
\end{equation}
Let $\mathbf{K = A^+}$, where $(a^+)_{nl} = max\{a_{nl}, 0\}$, which only distinguish the outgoing links at each sensor node $n$. Hence, the energy availability constraints in Eq. \eqref{energy availability constraints} can be rewritten as
\begin{equation}
\mathbf{Kp + Bx \leq E}.
\end{equation}
\section{Capacity Assignment Problem in Energy Harvesting WSNs with Interference Channel} \label{CAP-single-time-slot}
We consider the capacity assignment problem in WSNs with interference channel for a single energy harvesting sensor node in a time slot. Assume that the data flow assignments $d_{l}$ on all data links are fixed and available for harvested energy and transferred energy. The total delay $D$ in a WSNs is
\begin{equation}
D = \sum_{l \in E_T}\frac{d_{l}}{c_{l} - d_{l}}.
\end{equation}
Hence the goal of minimizing total delay in the energy harvesting WSNs with interference channel can be written as
\begin{subequations}\label{total-delay}
\begin{align}
\min_{c_{l},p_{l},x_{q}}  &\sum_{l \in E_T} \frac{d_{l}}{c_{l} - d_{l}} \\
\text{s.t.}\ &\mathbf{Kp + Bx \leq E}\\
&d_{l} \leq c_{l}, \forall l \in E_T \\
&x_q \geq 0.
\end{align}
\end{subequations}

As shown in Fig. \ref{Interference-model}, because the data transmission signals of active links interfere with each other, each data flow signal can not perform interference cancelation and is treated as an additive noise compared with the primary link signal. By utilizing the information rate $c_{l}$ in Eq. \eqref{shanon-fomula}, the minimizing total delay in the energy harvesting WSNs  with interference channel is
\begin{subequations} \label{mode(DFDFinformationrate)}
\begin{align}
\min_{p_{l},x_{q}}  &\sum_{l\in E_T} \frac{d_{l}}{\frac{1}{2}\log\left(1 + \frac{G_{ll} p_{l}}{\sum_{\bar{l} \neq l}G_{\bar{l}l} p_{\bar{l}} + \sigma_{l}}\right) - d_{l}} \label{mode(DFDFinformationrate)Obj} \\
\text{s.t.}\ &\mathbf{Kp + Bx \leq E} \\
&p_{l}\geq \frac{\sum_{\bar{l} \neq l}G_{\bar{l}l} p_{\bar{l}} + \sigma_{l}}{G_{ll}} \left(e^{2d_{l}} - 1\right),\forall \ l \in E_T \label{mode(DFDFinformationrate)Con}\\
&x_q \geq 0.
\end{align}
\end{subequations}
By analysing \eqref{mode(DFDFinformationrate)}, we find that the minimizing of the total delay depends on the maximizing of the information carrying capacity $c_{l}$. Meanwhile, because the information carrying capacity $c_{l}$ is a monotonically increasing function of $SINR_{l}(\mathbf{p})$, the maximizing of information carrying capacity $c_{l}$ depends on the maximizing of the $SINR_{l}(\mathbf{p})$.

Note that the optimization problem \eqref{mode(DFDFinformationrate)} is non-convex since both the objective function \eqref{mode(DFDFinformationrate)Obj} and the constrain condition \eqref{mode(DFDFinformationrate)Con} are non-convex in terms of transmission power vector $\mathbf{p}$, and it is not straightforward to attain the optimal solution. Therefore, we need to study the fundamental properties of the optimization problem \eqref{mode(DFDFinformationrate)} and transform it into the convex optimization problem.
\subsection{Convex Approximation} \label{convex-approximation}
We can get a convex approximation for capacity assignment problem with interference channel when the SINRs are relatively high (e.g., SINRs $\geq$ 5 or 10).
The information carrying capacity (or information rate) $c_{l}$ by using the Eq. (\ref{Rij}) can be rewritten as
\begin{equation} \label{shanon-fomula-approximation}
\begin{aligned}
c_{l}(\mathbf{p}) &\approx \frac{1}{2}\log(SINR_{l}(\mathbf{p})) \\
&=\frac{1}{2} \log \left(\frac{G_{ll} p_{l}}{\sum_{\bar{l} \neq l}G_{\bar{l}l} p_{\bar{l}} + \sigma_{l}} \right)                         \\
&= - \frac{1}{2} \log \left(\frac{\sum_{\bar{l} \neq l}G_{\bar{l}l} p_{\bar{l}} + \sigma_{l}}{G_{ll} p_{l}} \right)                    \\
&= - \frac{1}{2} \log \left(\frac{\sigma_{l}  p_{l}^{-1}}{G_{ll}} + \frac{\sum_{\bar{l} \neq l}G_{\bar{l}l} p_{\bar{l}} p_{l}^{-1}}{G_{ll}}  \right).
\end{aligned}
\end{equation}
Let $\tilde{p}_{l} = \log (p_{l})$, i.e., $p_{l} = e^{\tilde{p}_{l}}$ for $l\in E_T$, we define
\begin{equation}
\begin{aligned}
\tilde{c_{l}}(\tilde{p}) &= c_{l}(\mathbf{p}(\tilde{p})) \\
& = - \frac{1}{2} \log \left(\frac{\sigma_{l}  e^{-\tilde{p}_{l}}}{G_{ll}} + \frac{\sum_{\bar{l} \neq l}G_{\bar{l}l} e^{\tilde{p}_{\bar{l}}-\tilde{p}_{l}}}{G_{ll}}  \right),
\end{aligned}
\end{equation}
where the functions $\tilde{c_{l}}(\tilde{p})$ are concave in the vector $\tilde{p}$.

With the approximation information carrying capacity formula, the optimization problem \eqref{mode(DFDFinformationrate)} can be reformulated as
\begin{subequations} \label{convex-form}
\begin{align}
&\min_{\tilde{p}_{l},x_{q}}  \sum_{l \in E_T} \frac{d_{l}}{- \frac{1}{2} \log \left(\frac{\sigma_{l}  e^{-\tilde{p}_{l}}}{G_{ll}} + \frac{\sum_{\bar{l} \neq l}G_{\bar{l}l} e^{\tilde{p}_{\bar{l}}-\tilde{p}_{l}}}{G_{ll}}  \right) - d_{l}} \label{convex-form-a}\\
\text{s.t.}\ &\mathbf{Kp + Bx \leq E} \\
&e^{\tilde{p}_{l}}\geq \frac{\sum_{\bar{l} \neq l}G_{\bar{l}l} e^{\tilde{p}_{\bar{l}}} + \sigma_{l}}{G_{ll}} e^{2d_{l}}, \forall \ l \in E_T \label{convex-form-c}\\
&x_q \geq 0,
\end{align}
\end{subequations}
where the objective function \eqref{convex-form-a} is convex function in the new variable $\tilde{p}_{l}$ \cite{boyd2004convex}. The information carrying capacity constraint \eqref{convex-form-c} is convex function in $\tilde{p}_{l}$ and $d_{l}$. This means that the optimization problem (\ref{convex-form}) is a convex optimization problem and the global optimal solution can be found.
\begin{remark}
Here we use the approximation $\frac{1}{2}\log(1 + SINR_{l}(\mathbf{p})) \approx \frac{1}{2}\log(SINR_{l}(\mathbf{p}))$ which is reasonable for the optimization problem \eqref{mode(DFDFinformationrate)}, since $\frac{1}{2}\log(SINR_{l}(\mathbf{p}))  \leq \frac{1}{2}\log(1 + SINR_{l}(\mathbf{p}))$. This implies that the approximation is an underestimate and a more tighten constraint for the information carrying capacity $c_{l}(\mathbf{p})$. Therefore, the solution of convex problem \eqref{convex-form} is always feasible to the original optimization problem \eqref{mode(DFDFinformationrate)}.
\end{remark}
\subsection{Properties of Capacity Assignment Problem with Interference Channel}
For convex optimization problem \eqref{convex-form}, we form the dual problem by introducing Lagrange multiplier $\lambda \in R^N$, $\beta \in R^L$ and $\gamma \in R^Q$. The Lagrangian function is given by
\begin{equation}\label{Lagrangian-function}
\begin{small}
\begin{aligned}
 &L(\tilde{p}_{l}, x_q, \lambda, \beta, \gamma) \\
=&\sum_{l\in E_T} \frac{d_{l}}{- \frac{1}{2} \log \left(\frac{\sigma_{l}  e^{-\tilde{p}_{l}}}{G_{ll}} + \frac{\sum_{\bar{l} \neq l}G_{\bar{l}l} e^{\tilde{p}_{\bar{l}}-\tilde{p}_{l}}}{G_{ll}}  \right) - d_{l}} \\
&+\sum_n \lambda_n \left(\sum_{{l}\in \mathcal{O}_d(n)}e^{\tilde{p}_{l}} - E_n - \sum_{{q}\in I_q(n)} \eta_q x_q \right)  \\
&- \sum_{l\in E_T} \beta_{l} \left(e^{\tilde{p}_{l}} - \frac{\sum_{\bar{l} \neq l}G_{\bar{l}l} e^{\tilde{p}_{\bar{l}}} + \sigma_{l}}{G_{ll}} e^{2d_{l}} \right) - \sum_q \gamma_q x_q.
\end{aligned}
\end{small}
\end{equation}
The Lagrangian function \eqref{Lagrangian-function} corresponds to Lagrange dual function $\overline{Q}: R^N \times R^L \times R^Q \rightarrow R$ as
\begin{equation}\label{Lagrangian-dual-function}
\begin{aligned}
\overline{Q}(\lambda, \beta, \gamma)&=  \inf_{\tilde{p}_{l},x_q} L(\tilde{p}_{l}, x_q, \lambda, \beta, \gamma).
\end{aligned}
\end{equation}
The dual optimization problem is
\begin{subequations} \label{Lagrangian-problem}
\begin{align}
&\max\  \overline{Q}(\lambda, \beta, \gamma) \\
&\text{s.t.}\ \lambda \geq 0, \beta \geq 0, \gamma \geq 0.
\end{align}
\end{subequations}
The KKT optimality conditions hold for the convex optimization problem \eqref{convex-form}, thus we have
\begin{equation} \label{KKT-p}
\begin{small}
\begin{aligned}
&\frac{\partial L}{\partial \tilde{p}_{l}} = \frac{\partial t_{l}(\tilde{p}_{l})}{\partial \tilde{p}_{l}} + e^{\tilde{p}_{l}}\left[\lambda_{i(l)} - \left(\beta_{l}-\beta_{\bar{l}} \sum_{\bar{l}\neq l}\frac{G_{l\bar{l}} e^{2d_{\bar{l}}}}{G_{\bar{l}\bar{l}}}\right)\right] = 0, \\
&\forall l,\bar{l} \in E_T
\end{aligned}
\end{small}
\end{equation}
\begin{equation} \label{KKT-x}
\frac{\partial L}{\partial x_q} = - \eta_q \lambda_{j(q)} - \gamma_q = 0, \forall i,j \in V_T, \ \forall q,
\end{equation}
where
\begin{equation}
\begin{aligned}
&t_{l}(\tilde{p}_{l}) \triangleq d_{l}\left[- \frac{1}{2} \log \left(\frac{\sigma_{l}  e^{-\tilde{p}_{l}}}{G_{ll}} + \frac{\sum_{\bar{l} \neq l}G_{\bar{l}l} e^{\tilde{p}_{\bar{l}}-\tilde{p}_{l}}}{G_{ll}}  \right) - d_{l}\right]^{-1}.
\end{aligned}
\end{equation}
The complementary slackness conditions are
\begin{equation} \label{complementaryslackness-lambda}
\begin{aligned}
&\lambda_n \left(\sum_{{l}\in \mathcal{O}_d(n)}e^{\tilde{p}_{l}} - E_n - \sum_{{q}\in I_q(n)} \eta_q x_q \right) = 0,
\forall n \in V_T
\end{aligned}
\end{equation}
\begin{equation} \label{complementaryslackness-beta}
\beta_{l} \left(e^{\tilde{p}_{l}} - \frac{\sum_{\bar{l} \neq l}G_{\bar{l}l} e^{\tilde{p}_{\bar{l}}} + \sigma_{l}}{G_{ll}} e^{2d_{l}}\right) = 0, \forall l \in E_T
\end{equation}
\begin{equation} \label{complementaryslackness-gamma}
\gamma_q x_q = 0, \ \forall q.
\end{equation}

We extend Lemmas 1 and 2 in \cite{gurakan2016optimal} and derive some properties about the optimal power allocation with interference channel as follows.
\begin{lemma} \label{lemma-beta-zero}
The feasibility of the convex optimization problem \eqref{convex-form}  requires $\beta_{l} = 0, \forall l \in E_T$.
\end{lemma}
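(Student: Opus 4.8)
The plan is to read off the claim directly from the complementary slackness condition \eqref{complementaryslackness-beta}, using the fact that the objective term on each link blows up on the boundary of its feasible region. First I would recognize that constraint \eqref{convex-form-c} is merely a restatement of $\tilde{c_l}(\tilde{p}) \geq d_l$: rearranging \eqref{convex-form-c} into $\frac{\sigma_l + \sum_{\bar{l}\neq l}G_{\bar{l}l}e^{\tilde{p}_{\bar{l}}}}{G_{ll}e^{\tilde{p}_l}} \leq e^{-2d_l}$ and applying $-\tfrac12\log(\cdot)$ to both sides recovers exactly $\tilde{c_l}(\tilde{p}) \geq d_l$. Consequently the parenthesised factor $e^{\tilde{p}_l} - \big(\sum_{\bar{l}\neq l}G_{\bar{l}l}e^{\tilde{p}_{\bar{l}}} + \sigma_l\big)e^{2d_l}/G_{ll}$ multiplying $\beta_l$ in \eqref{complementaryslackness-beta} vanishes precisely when the approximate capacity meets the data flow, i.e. when $\tilde{c_l}(\tilde{p}) = d_l$.

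Next I would examine the $l$-th summand of the objective, namely $t_l(\tilde{p}_l) = d_l/(\tilde{c_l}(\tilde{p}) - d_l)$. Since each $d_l$ is drawn from $U(0,1]$ we have $d_l > 0$, so as $\tilde{c_l}(\tilde{p}) \downarrow d_l$ the term $t_l \to +\infty$ and the total delay diverges. Therefore any candidate point at which \eqref{convex-form-c} is active yields an infinite objective and cannot be a meaningful feasible solution of \eqref{convex-form}; put differently, the effective domain of the delay objective already forces the strict inequality $\tilde{c_l}(\tilde{p}) > d_l$, so the constraint \eqref{convex-form-c} is inactive at every feasible point and for every $l \in E_T$.

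Finally, I would feed this back into complementary slackness. Because the parenthesised factor in \eqref{complementaryslackness-beta} is strictly positive at any feasible point, the product $\beta_l\big(e^{\tilde{p}_l} - (\sum_{\bar{l}\neq l}G_{\bar{l}l}e^{\tilde{p}_{\bar{l}}} + \sigma_l)e^{2d_l}/G_{ll}\big)$ can equal zero only if $\beta_l = 0$; since this holds for each $l \in E_T$, the claim follows.

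The main obstacle I anticipate is conceptual rather than computational: one must argue cleanly that \emph{feasibility} here should be interpreted as feasibility with a finite objective value, i.e. membership in the effective domain of the delay function, since constraint \eqref{convex-form-c} taken in isolation does permit equality. Once that interpretation is pinned down, the divergence of $t_l$ at the boundary carries the argument, and the algebra verifying the equivalence of \eqref{convex-form-c} with $\tilde{c_l}(\tilde{p}) \geq d_l$ is entirely routine.
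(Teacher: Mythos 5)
Your proposal is correct and follows essentially the same route as the paper's own proof: equality in constraint \eqref{convex-form-c} forces the corresponding delay term (and hence the objective) to blow up, so feasibility with bounded objective requires strict inequality on every link, and complementary slackness \eqref{complementaryslackness-beta} then yields $\beta_l = 0$ for all $l \in E_T$. In fact your write-up is more careful than the paper's terse argument, since you make explicit both the equivalence of \eqref{convex-form-c} with $\tilde{c_l}(\tilde{p}) \geq d_l$ and the interpretation of feasibility as membership in the effective domain of the delay objective.
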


\begin{proof}
The proof is similar procedure in \cite{gurakan2016optimal}. If the convex optimization problem \eqref{convex-form} is feasible, the objective function \eqref{convex-form-a} must be guaranteed to bound. The constraint condition \eqref{convex-form-c} for any data link $l$ means that the objective function \eqref{convex-form-a} is unbounded. Thus the constraint condition \eqref{convex-form-c} must strictly satisfy the inequalities for all data link $l$. From Eq. \eqref{complementaryslackness-beta} we can conclude that $\beta_{l} = 0, \forall l \in E_T$.
\end{proof}

\begin{lemma} \label{leman-single-time-slot}
At each sensor node $v_n$, the optimal power allocation with interference channel among data links satisfies
\begin{equation} \label{lemma-t-double}
\frac{\partial t_{l}(\tilde{p}_{l})}{\partial \tilde{p}_{l}} = \frac{\partial t_i(\tilde{p}_i)}{\partial \tilde{p}_{i}}, \forall l \in E_T, \forall i \in \mathcal{O}_d(n).
\end{equation}
\end{lemma}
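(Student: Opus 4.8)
The plan is to read the claim off the stationarity condition \eqref{KKT-p}, after first using Lemma \ref{lemma-beta-zero} to clear the multipliers attached to the rate constraint \eqref{convex-form-c}. I would begin by invoking Lemma \ref{lemma-beta-zero}, which gives $\beta_{l}=0$ for every $l\in E_T$; in particular both $\beta_{l}$ and each $\beta_{\bar{l}}$ inside the bracket of \eqref{KKT-p} vanish. Substituting into \eqref{KKT-p} collapses the stationarity condition of the convex problem \eqref{convex-form} to
\[
\frac{\partial t_{l}(\tilde{p}_{l})}{\partial \tilde{p}_{l}} + e^{\tilde{p}_{l}}\lambda_{i(l)} = 0, \qquad \forall\, l\in E_T,
\]
so that at any KKT point $\partial t_{l}/\partial \tilde{p}_{l} = -e^{\tilde{p}_{l}}\lambda_{i(l)}$. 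This is the single relation from which the whole lemma must be extracted.

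Next I would localize this to a fixed node $v_n$. Every outgoing link $l\in\mathcal{O}_d(n)$ has start node $i(l)=n$, so all of them are governed by one and the same multiplier $\lambda_n$ of the energy-availability constraint, whose value is pinned down by the complementary-slackness condition \eqref{complementaryslackness-lambda} at $v_n$. This shared $\lambda_n$ is the structural heart of the statement: the outgoing links of a node are coupled only through the common resource budget, so a single $\lambda_n$ controls the entire bundle $\mathcal{O}_d(n)$. To make the reduced stationarity condition explicit I would also record the elementary computation that, in the high-SINR model, $\partial c_{l}/\partial \tilde{p}_{l}=\tfrac12$ for every link — inside the logarithm the interference and noise terms factor out as $e^{-\tilde{p}_{l}}$, leaving a linear $\tfrac12\tilde{p}_{l}$ — whence $\partial t_{l}/\partial \tilde{p}_{l} = -\tfrac{d_{l}}{2(c_{l}-d_{l})^{2}}$. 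Writing the reduced stationarity condition for two outgoing links $l,i\in\mathcal{O}_d(n)$ and eliminating the common $\lambda_n$ between them is what is intended to yield the equalization \eqref{lemma-t-double}, recording that all marginal delay contributions of $v_n$'s outgoing links are tied to this one governing multiplier.

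I expect the main obstacle to be precisely the per-link factor $e^{\tilde{p}_{l}}$ that the logarithmic change of variables injects into the stationarity condition. Unlike the orthogonal-channel derivation of \cite{gurakan2016optimal}, here the shared multiplier $\lambda_n$ multiplies $e^{\tilde{p}_{l}}$ rather than a constant, so the two expressions $-e^{\tilde{p}_{l}}\lambda_n$ and $-e^{\tilde{p}_{i}}\lambda_n$ are not manifestly equal, and reconciling them with the literal identity \eqref{lemma-t-double} is the delicate point. My approach would be to work from the common value $\lambda_n$ fixed by \eqref{complementaryslackness-lambda} and argue that the optimal allocation drives the marginal quantities of the bundle $\mathcal{O}_d(n)$ to this single governing level, then cross-check the result against the explicit closed form $-\tfrac{d_{l}}{2(c_{l}-d_{l})^{2}}$. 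That reconciliation of the $e^{\tilde{p}_{l}}$ pre-factors is the step I would scrutinize most carefully, since it is exactly where the interference-channel version departs from the orthogonal-channel prototype and where the equal-marginal interpretation of \eqref{lemma-t-double} must be justified.
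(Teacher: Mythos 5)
Your first two steps coincide exactly with the paper's proof: use Lemma \ref{lemma-beta-zero} to eliminate the $\beta$-terms from \eqref{KKT-p}, giving $\partial t_{l}(\tilde{p}_{l})/\partial \tilde{p}_{l} = -e^{\tilde{p}_{l}}\lambda_{i(l)}$, then observe that every $l \in \mathcal{O}_d(n)$ is governed by the single multiplier $\lambda_n$. The obstacle you flag — that $-e^{\tilde{p}_{l}}\lambda_n$ and $-e^{\tilde{p}_{i}}\lambda_n$ are not equal unless the powers coincide — is genuine, and you should know that the paper's own proof does not resolve it: it simply writes the chain $\partial t_{l}/\partial \tilde{p}_{l} = -e^{\tilde{p}_{l}}\lambda_{n} = \partial t_{i}/\partial \tilde{p}_{i}$, silently identifying the mismatched prefactors $e^{\tilde{p}_{l}}$ and $e^{\tilde{p}_{i}}$. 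However, your proposal does not close the gap either: ``arguing that the optimal allocation drives the marginal quantities to the single governing level'' is a restatement of the desired conclusion, not an argument, and no such argument can exist because \eqref{lemma-t-double} read literally in the $\tilde{p}$-variables is false at a generic KKT point with $\lambda_n > 0$ and $\tilde{p}_{l} \neq \tilde{p}_{i}$. The correct reconciliation is the chain rule back to the unlogged powers: since $\tilde{p}_{l} = \log p_{l}$,
\[
\frac{\partial t_{l}}{\partial p_{l}} \;=\; e^{-\tilde{p}_{l}}\,\frac{\partial t_{l}(\tilde{p}_{l})}{\partial \tilde{p}_{l}} \;=\; -\lambda_n, \qquad \forall\, l \in \mathcal{O}_d(n),
\]
an $l$-independent constant. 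This is the equal-marginal-delay condition that actually follows from stationarity, it matches the orthogonal-channel prototype in \cite{gurakan2016optimal}, and it is the form in which the lemma holds verbatim; in the $\tilde{p}$-variables the identity must carry the weights $e^{-\tilde{p}_{l}}$ and $e^{-\tilde{p}_{i}}$.

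A secondary inaccuracy in your cross-check: the closed form $\partial t_{l}/\partial \tilde{p}_{l} = -\tfrac{d_{l}}{2(c_{l}-d_{l})^{2}}$ is only the own-link term. Your observation that $\partial \tilde{c}_{l}/\partial \tilde{p}_{l} = \tfrac12$ is correct, but under the paper's convention (see the explicit derivative displayed before \eqref{lambda-no-energy-transfer}) $\partial t_{l}(\tilde{p}_{l})/\partial \tilde{p}_{l}$ also carries the positive cross-interference contributions $\tfrac12\sum_{\bar{l}\neq l} d_{\bar{l}}\,(c_{\bar{l}}-d_{\bar{l}})^{-2}\, G_{l\bar{l}}e^{\tilde{p}_{l}}\big/\bigl(\sigma_{\bar{l}} + \sum_{k\neq\bar{l}}G_{k\bar{l}}e^{\tilde{p}_{k}}\bigr)$, because raising $p_{l}$ degrades every other active link's SINR. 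Dropping these terms would not only contradict the paper's own expression but would also change the value of $\lambda_n$ you would extract from the complementary-slackness condition \eqref{complementaryslackness-lambda}, so your proposed ``cross-check against the closed form'' would fail precisely in the interference-channel setting this lemma is about.
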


\begin{proof}
The proof is similar procedure in \cite{gurakan2016optimal}.
Combining Eq. \eqref{KKT-p} and Lemma \ref{lemma-beta-zero}, we attain
\begin{equation}
\frac{\partial t_{l}(\tilde{p}_{l})}{\partial \tilde{p}_{l}} = -e^{\tilde{p}_{l}}\lambda_{i(l)}, \forall l \in  E_T.
\end{equation}
Since the outgoing links $l$ and $i$ reside to the same sensor node $n$, we have
\begin{equation}
\frac{\partial t_{l}(\tilde{p}_{l})}{\partial \tilde{p}_{l}} = -e^{\tilde{p}_{l}}\lambda_{i} = \frac{\partial t_i(\tilde{p}_i)}{\partial \tilde{p}_{i}}.
\end{equation}
Thus we can conclude that Eq. \eqref{lemma-t-double} holds.
\end{proof}

%
In the next subsections, we separately solve the convex optimization problem \eqref{convex-form} under two cases, i.e., no energy transfer and energy transfer.

\subsection{Case without Energy Transfer}
As energy transfer does not occur in this case, we have $x_{q} = 0, \forall q$. Thus the convex optimization problem \eqref{convex-form} becomes only in respect of $\tilde{p}_{l}$ as follows:
\begin{subequations} \label{convex-form-no-energy-transfer}
\begin{align}
\min_{\tilde{p}_{l}}  &\sum_{l\in E_T} \frac{d_{l}}{- \frac{1}{2} \log \left(\frac{\sigma_{l}  e^{-\tilde{p}_{l}}}{G_{ll}} + \frac{\sum_{\bar{l} \neq l}G_{\bar{l}l} e^{\tilde{p}_{\bar{l}}-\tilde{p}_{l}}}{G_{ll}}  \right) - d_{l}} \\
\text{s.t.}\ &\sum_{{l}\in \mathcal{O}_d(n)}e^{\tilde{p}_{l}} \leq E_n, \forall n \in V_T  \\
&e^{\tilde{p}_{l}}\geq \frac{\sum_{\bar{l} \neq l}G_{\bar{l}l} e^{\tilde{p}_{\bar{l}}} + \sigma_{l}}{G_{ll}} e^{2d_{l}}, \forall \ l \in E_T.
\end{align}
\end{subequations}
Since we employ \emph{half-duplex} WSNs, the optimization problem can be considered $\bar{L}$ active data links in the energy harvesting WSNs with interference channel as
\begin{subequations} \label{delay-no-energy-transfer}
\begin{align}
\min_{\tilde{p}_{l}} &\sum^{\bar{L}}_{i=1} \sum_{l \in \mathcal{O}_d(n)} \frac{-2d_{l}}{\log \left(\frac{\sigma_{l}  e^{-\tilde{p}_{l}}+\sum_{\bar{l} \neq l}G_{\bar{l}l} e^{\tilde{p}_{\bar{l}}-\tilde{p}_{l}}}{G_{ll}} \right) - d_{l}} \\
\text{s.t.}\ &\sum_{{l}\in \mathcal{O}_d(n)}e^{\tilde{p}_{l}} \leq E_n, \forall n \in V_T  \label{delay-no-energy-transfer-b}\\
&e^{\tilde{p}_{l}}\geq \frac{\sum_{\bar{l} \neq l}G_{\bar{l}l} e^{\tilde{p}_{\bar{l}}} + \sigma_{l}}{G_{ll}} e^{2d_{l}}, \forall \ l \in E_T. \label{delay-no-energy-transfer-c}
\end{align}
\end{subequations}
If the optimization problem \eqref{delay-no-energy-transfer} is feasible, then it requires
\begin{equation}
\sum_{{l}\in \mathcal{O}_d(n)}\frac{\sum_{\bar{l} \neq l}G_{\bar{l}l} e^{\tilde{p}_{\bar{l}}} + \sigma_{l}}{G_{ll}} e^{2d_{l}} \leq E_n,
\end{equation}
which we assume that it holds. Similar to \eqref{Lagrangian-function}, \eqref{delay-no-energy-transfer} corresponding to Lagrangian function $\hat{L}$ with $\lambda \in R^N$ is
\begin{equation}\label{Lagrangian-function-on-energy}
\begin{small}
\begin{aligned}
&\hat{L}(\tilde{p}_{l},\lambda) \\
= & \sum^{\bar{L}}_{i=1}\sum_{l \in \mathcal{O}_d(n)} \frac{-2d_{l}}{\log \left(\frac{\sigma_{l}  e^{-\tilde{p}_{l}}+\sum_{\bar{l} \neq l}G_{\bar{l}l} e^{\tilde{p}_{\bar{l}}-\tilde{p}_{l}}}{G_{ll}} \right) - d_{l}} \\
&+\sum_n \lambda_n \left(\sum_{{l}\in \mathcal{O}_d(n)}e^{\tilde{p}_{l}} - E_n \right).
\end{aligned}
\end{small}
\end{equation}
Meanwhile, the KKT optimality condition is
\begin{equation} \label{Lagrangian-function-on-energy- KKT}
\frac{\partial \hat{L}}{\partial \tilde{p}_{l}} = \frac{\partial t_{l}(\tilde{p}_{l})}{\partial \tilde{p}_{l}} + e^{\tilde{p}_{l}} \lambda = 0, \forall l \in \mathcal{O}_d(n)
\end{equation}
and the complementary slackness condition is
\begin{equation} \label{complementary-no-energy}
\lambda \left(\sum_{{l}\in \mathcal{O}_d(n)}e^{\tilde{p}_{l}} - E_n \right) = 0,\ \forall l \in E_T.
\end{equation}
\begin{equation}
\begin{small}
\begin{aligned}
&\frac{\partial t_{l}(\tilde{p}_{l})}{\partial \tilde{p}_{l}}\\ =
&-\frac{1}{2}d_{l}\left[-\frac{1}{2}\log\left(\frac{\sigma_{l}  e^{-\tilde{p}_{l}}}{G_{ll}} + \frac{\sum_{\bar{l} \neq l}G_{\bar{l}l} e^{\tilde{p}_{\bar{l}}-\tilde{p}_{l}}}{G_{ll}}  \right)-d_{l}\right]^{-2} \\
&+\frac{1}{2}\sum_{\bar{l} \neq l}\left\{d_{\bar{l}}\left[-\frac{1}{2}\log\left(\frac{\sigma_{\bar{l}} e^{-\tilde{p}_{\bar{l}}}+\sum_{k \neq \bar{l}}G_{k\bar{l}} e^{\tilde{p}_{k}-\tilde{p}_{\bar{l}}}}{G_{\bar{l}\bar{l}}}  \right) \right. \right.\\
&\left. \left.-d_{\bar{l}}\right]^{-2}\left(\frac{G_{l\bar{l}}e^{\tilde{p}_{l}}} {\sigma_{\bar{l}} +\sum_{k \neq \bar{l}}G_{k\bar{l}} e^{\tilde{p}_{k}}} \right)\right\}, \forall l,\bar{l},k \in \bar{L}
\end{aligned}
\end{small}
\end{equation}
From Eq. \eqref{Lagrangian-function-on-energy- KKT}, we have
\begin{equation} \label{lambda-no-energy-transfer}
\begin{small}
\begin{aligned}
\lambda & = -\frac{\partial t_{l}(\tilde{p}_{l})}{\partial \tilde{p}_{l}}e^{-\tilde{p}_{l}} = \\
&\frac{d_{l}}{2e^{\tilde{p}_{l}}}\left[-\frac{1}{2}\log\left(\frac{\sigma_{l}  e^{-\tilde{p}_{l}}}{G_{ll}} + \frac{\sum_{\bar{l} \neq l}G_{\bar{l}l} e^{\tilde{p}_{\bar{l}}-\tilde{p}_{l}}}{G_{ll}}  \right)-d_{l}\right]^{-2}\\
&-\sum_{\bar{l} \neq l}\left\{\frac{d_{\bar{l}}}{2}\left[-\frac{1}{2}\log\left(\frac{\sigma_{\bar{l}} e^{-\tilde{p}_{\bar{l}}}+\sum_{k \neq \bar{l}}G_{k\bar{l}} e^{\tilde{p}_{k}-\tilde{p}_{\bar{l}}}}{G_{\bar{l}\bar{l}}}  \right) \right. \right.\\
&\left. \left.-d_{\bar{l}}\right]^{-2}\left(\frac{G_{l\bar{l}}} {\sigma_{\bar{l}} +\sum_{k \neq \bar{l}}G_{k\bar{l}} e^{\tilde{p}_{k}}} \right)\right\}, \forall l,\bar{l},k \in \bar{L}
\end{aligned}
\end{small}
\end{equation}
where $\bar{L}$ is the number of active data links in a time slot.

For the total energy constraint condition Eq. \eqref{delay-no-energy-transfer-b}, the optimal power allocation can be found by searching the optimal $\lambda^*$.

\begin{remark}
The constraint condition \eqref{delay-no-energy-transfer-c} is not included in the Lagrangian function (\ref{Lagrangian-function-on-energy}), since the constraint condition \eqref{delay-no-energy-transfer-c} will always hold when the convex optimization problem \eqref{delay-no-energy-transfer} is feasible.
\end{remark}
\subsection{Case with Energy Transfer}
Next, we solve the case with energy transfer, which implies $x_q \geq 0$ for some energy links $q$. The convex optimization problem \eqref{convex-form} becomes
\begin{subequations} \label{convex-form-energy-transfer}
\begin{align}
\min_{\tilde{p}_{l},x_{q}}  &\sum_{l \in E_T} \frac{d_{l}}{- \frac{1}{2} \log \left(\frac{\sigma_{l}  e^{-\tilde{p}_{l}}}{G_{ll}} + \frac{\sum_{\bar{l} \neq l}G_{\bar{l}l} e^{\tilde{p}_{\bar{l}}-\tilde{p}_{l}}}{G_{ll}}  \right) - d_{l}} \\
\text{s.t.}\ &\sum_{{l}\in \mathcal{O}_d(n)}e^{\tilde{p}_{l}} \leq E_n + \sum_{{q}\in \mathcal{I}_q(n)} \eta_q x_q, \forall n \in V_T  \\
&e^{\tilde{p}_{l}}\geq \frac{\sum_{\bar{l} \neq l}G_{\bar{l}l} e^{\tilde{p}_{\bar{l}}} + \sigma_{l}}{G_{ll}} e^{2d_{l}}, \forall \ l \in E_T \\
&x_q \geq 0.
\end{align}
\end{subequations}
According to the \emph{half-duplex} mode, the optimization problem \eqref{convex-form-energy-transfer} which has $\bar{L}$ active data links in the energy harvesting WSNs with interference channel can be written as
\begin{subequations} \label{convex-form-energy-transfer-active}
\begin{align}
\min_{\tilde{p}_{l},x_{q}}  &\sum^{\bar{L}}_{i=1} \sum_{l \in \mathcal{O}_d(n)} \frac{d_{l}}{- \frac{1}{2} \log \left(\frac{\sigma_{l}  e^{-\tilde{p}_{l}}}{G_{ll}} + \frac{\sum_{\bar{l} \neq l}G_{\bar{l}l} e^{\tilde{p}_{\bar{l}}-\tilde{p}_{l}}}{G_{ll}}  \right) - d_{l}} \\
\text{s.t.}\ &\sum_{{l}\in \mathcal{O}_d(n)}e^{\tilde{p}_{l}} \leq E_n + \sum_{{q}\in \mathcal{I}_q(n)} \eta_q x_q, \forall n \in V_T  \\
&e^{\tilde{p}_{l}}\geq \frac{\sum_{\bar{l} \neq l}G_{\bar{l}l} e^{\tilde{p}_{\bar{l}}} + \sigma_{l}}{G_{ll}} e^{2d_{l}}, \forall \ l \in E_T \\
&x_q \geq 0.
\end{align}
\end{subequations}

As in Section \ref{energy-cooperation-model}, it is assumed that some energy $x_q > 0$ is transferred from the sensor node $v_i$ to the sensor node $v_j$ over energy link $q$. Since sensor node $v_i$ only transfers energy and does not transmit data, the energy causality constraint condition on sensor node $v_j$ is denoted as
\begin{equation} \label{optimal-lambda-power-allocation}
\sum_{{l}\in \mathcal{O}_d(j)}e^{\tilde{p}_{l}}(\lambda_j^*) = E_j + \eta_q x_q.
\end{equation}
Therefore, by combining Eq. \eqref{lambda-no-energy-transfer} and Eq. \eqref{optimal-lambda-power-allocation}, we can attain optimal power allocations if we find the optimal $\lambda_j^*$.
The Lagrangian method can provide some ideas and in-depth insight on the above-defined optimization problem. However, it is difficult to find a close-form optimal solution. Therefore, We use the CVX solver \cite{grant2008cvx} to tackle the optimization problems \eqref{delay-no-energy-transfer} and \eqref{convex-form-energy-transfer-active} in this paper.

\section{Experiment Results and Analysis} \label{performance results}
We provide some simple experimental results to demonstrate the results of the optimal energy-delay polices in the energy harvesting WSNs with interference channel. Note that we only consider the total delay of all active links in the network in a time slot, thus the power and energy can be interchangeable. We conduct our experiment on a PC with the Intel(R) Core(TM) i7-7700, 3.60 GHz CPU, 8GB RAM and Windows 8 (version 6.2). We use CVX 2.1 \cite{grant2008cvx} which is implemented in MATLAB 9.2 (version R2017a) to solve the optimization problems.

\subsection{Simulation Results}
In the simulations, a tree-based WSNs topologies are considered.
\begin{figure}[htb]
\begin{center}
$\begin{array}{l}
\includegraphics[width=3.0in]{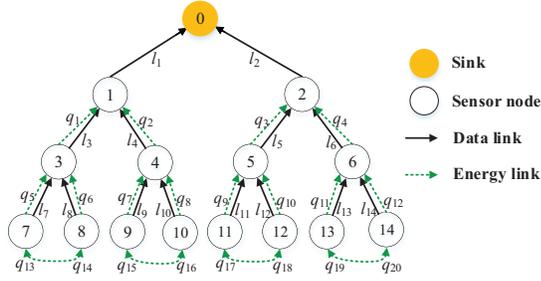}\\
 \end{array}$
\end{center}
\caption{Data and energy topologies.} \label{routing model}
\end{figure}
Fig. \ref{routing model} shows the data and energy topologies in energy harvesting WSNs, which has 1 sink (i.e., $v_0$), 14 sensor nodes, 14 directed data links and 20 directed energy links. It is noted that each leaf sensor node only needs to transfer energy from its sibling neighboring sensor node; each parent sensor node needs to transfer energy from children sensor nodes in order to transmit successfully heavy sensing data from itself and children sensor nodes; and the sink node does not need to transfer energy since it is not energy-limited. Meanwhile, the \emph{half-duplex} mode is adopted in the network system. In other words, there are only few active links in a time slot. In Fig. \ref{Interference-model}, we observe that there are 5 active links keeping simultaneous communication at the first time slot.

At each time slot, the energy arrivals follow an $i.i.d$ Poisson distribution $\mathbf{P}(\lambda)$ with $\lambda = 8$, and the data flow on each data link follows the uniform distribution $U(0,1]$. Similar to reference \cite{johansson2003simultaneous}, all the receivers have the same noise power $\sigma_{ij} = 1\times 10^{-5}$ units; all diagonal entries of the channel grain matrix $\mathbf{G}$ are set to 1 and the off-diagonal entries are attained by the uniform distribution $U(0,0.01]$. Energy transfer efficiency $\eta_q$ is set to 0.6 on all energy links \cite{watfa2011multi}.

As an example, we adopt the data and energy  topologies in Fig. \ref{Interference-model} to perform evaluation the optimization problem.
The fixed data flows are $\mathbf{d} = [d_{l_1},d_{l_8},d_{l_9},d_{l_{12}},d_{l_{13}}]^T  = [0.4585,0.8752,0.6869,0.2313,0.4887]^T$ units. The energy arrival vector $\mathbf{E_1} = [9,10,7,8,9]^T$ units and $\mathbf{E_2} = [11,10,8,4,6]^T$ units denote transmitters $\{v_1,v_8,v_9,v_{12}, v_{13}\}$ and transferring energy sensor nodes $\{v_4,v_7,v_{10},v_{11}, v_{14}\}$, respectively. The energy transfer efficiency vector is $\bm{\eta}$  = $[0.6,0.6,0.6,0.6,0.6]^T$ \footnote{Here we only give data flow of active links, corresponding to the energy of sensor nodes and the efficiency of energy transfer.}. The solution results of optimization problem under two scenarios (i.e., no energy transfer and energy transfer) are shown in the right half of Table \ref{Solution results}. In order to further confirm the significance of our study, we also perform the optimization problem of orthogonal channel \cite{gurakan2016optimal} in the tree-based network topologies. The solution results are shown in the left half of Table \ref{Solution results}.
\begin{table*}[htbp]
\begin{threeparttable}
\caption{Solution results of optimization problem under both orthogonal channel and interference channel at the first time slot.}\label{Solution results}
\begin{center}
\setlength{\tabcolsep}{0.80em}
\begin{tabular}{ |c|c|c|c|c|c|c|c|c|c|c|c|c| }
  \hline
  \multirow{3}*{\textbf{Link}} &\multicolumn{5}{c|}{\textbf{Orthogonal channel}} &\multicolumn{7}{c|}{\textbf{Interference channel}}\\ \cline{2-13}
  &\multicolumn{2}{c|}{\textbf{No energy transfer}} & \multicolumn{3}{c|}{\textbf{Energy transfer}} & \multicolumn{3}{c|}{\textbf{No energy transfer}} & \multicolumn{4}{c|}{\textbf{Energy transfer}} \\ \cline{2-13}
  \multirow{3}*{} & \textbf{Power} &\textbf{Delay} & \textbf{Power} & \textbf{TE} &\textbf{Delay} & \textbf{Power} & \textbf{SINR} &\textbf{Delay} & \textbf{Power} & \textbf{TE} &\textbf{SINR} &\textbf{Delay} \\
 \hline
  $l_1$ & 8.8143    & \multirow{5}*{0.3740} & 15.6000 & 11.0000 & \multirow{5}*{0.3622} & 5.1660  & 78.6533 & \multirow{5}*{1.8858} & 8.2649  & 7.9520  & 78.6532  &\multirow{5}*{1.8857}\\
  $l_8$ & 10.0000   & \multirow{5}*{}       & 16.0000 & 10.0000 &\multirow{5}*{}        & 10.0000 &143.1230 & \multirow{5}*{}       & 16.0000 & 10.0000 & 143.1436 &\multirow{5}*{}\\
  $l_9$ & 7.0000    & \multirow{5}*{}       & 11.8000 & 8.0000  & \multirow{5}*{}       & 4.6663  &57.5294  & \multirow{5}*{}       & 7.4654  & 6.2319  & 57.5311  &\multirow{5}*{}\\
  $l_{12}$ & 6.4475 & \multirow{5}*{}       & 10.4000 & 4.0000  & \multirow{5}*{}       & 2.5360  &14.3840  & \multirow{5}*{}       & 4.0573  & 1.2875  & 14.3839  &\multirow{5}*{}\\
  $l_{13}$ & 9.0000 & \multirow{5}*{}       & 12.6000 & 6.0000  & \multirow{5}*{}       & 3.5185  &43.8209  & \multirow{5}*{}       & 5.6291  & 3.0528  & 43.8212  &\multirow{5}*{}\\
  \hline
\end{tabular}
\begin{tablenotes}
\item[1] All variables are uniform units.
\item[2] Transferred energy is abbreviated as TE.
\end{tablenotes}
\end{center}
\end{threeparttable}
\end{table*}

To better evaluate the optimization problem, a \emph{data collection round} \cite{imon2015energy} is defined for a process where the sink collects sensing data from all sensor nodes, the sensing data is in turn transferred from leaf sensor nodes to sink over parent sensor nodes. In particular, the parent sensor nodes not only transmit received sensing data of child sensor nodes, but also transmit their own sensing data to their parent sensor nodes. In Fig. \ref{routing model}, a \emph{data collection round} is divided into 6 time slots according to the \emph{half-duplex} communication mode.
Using the same parameter settings, we perform the optimization problem under both \emph{orthogonal channel} (OC) and \emph{interference channel} (IFC) with no energy transfer and energy transfer, respectively. We attain the total network delay over time as shows in Fig. \ref{accumulateddelayround}.
\begin{figure}[htb]
\begin{center}
$\begin{array}{l}
\includegraphics[width=3.0in]{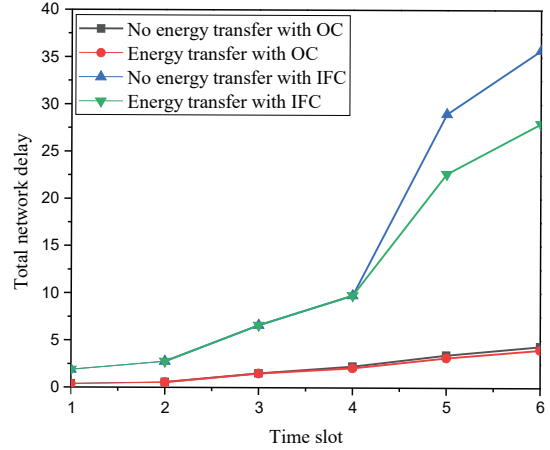}\\
 \end{array}$
\end{center}
\caption{Total delay of energy harvesting WSNs over time.} \label{accumulateddelayround}
\end{figure}
\subsection{Performance Analysis}
From Table \ref{Solution results} and Fig. \ref{accumulateddelayround}, we observe that some interesting results:
\begin{enumerate}
\item The network delay in the orthogonal channel is less than that in the interference channel. It means that the interference signals among data links significantly affect the total network delay in energy harvesting WSNs, which should not be ignored in the WSNs design.
\item In the models of orthogonal channel and interference channel, the network delay with no energy transfer is more than that with energy transfer. Since energy transfer between the energy-rich sensor nodes and the energy-hungry sensor nodes can help to decrease the total delay and enhance the total performance in WSNs.
\item In tree-based WSNs topologies, the sensor node is closer to the sink, the more energy is needed since it has heavier traffic loads. The total network delay also increases for the fixed channel gain.
\item In  our model, the power allocation of each active link is proportional to the amount of data flow and SINR.
\end{enumerate}

\section{Conclusion and Discussion} \label{conclusion}
We have investigated the optimal data rates, power allocations and energy transfers for minimizing the total delay in the energy harvesting WSNs with interference channel in a time slot. We have formulated the optimization problem which subjects to information rate requirements, energy and power consumption as a non-convex optimization problem under two cases, i.e., no energy transfer and energy transfer. By exploiting the convex approximation with  relatively high SINR, the optimization problem has been converted into a tractable convex problem. Moreover, we also have derived the properties of the optimal solution by Lagrange duality. Finaly, we solved the optimization problem by CVX solver. The experimental results shown that when data flow and energy topologies were fixed, the interference signals significantly effect the network performance; the energy transfer can help to decrease the total network delay; and the power allocation on each data link was proportional to the amount of data flow and SINR for the energy harvesting WSNs in a time slot.

Furthermore, our work can be further extended in some aspects. First, the approximate method only suits for the case of relatively high SINR and can not be used to deal with the case of low SINR in the network. Second, we can not provide a close-form solution for the optimization problem and only employed the experimental results to explain the optimization problem, making it difficult to carry out theoretical analysis on the relationship between data flow and energy flow under interference channel in a time slot. Moreover, the network topology of our work can be replaced by the others.  In the future, we will consider the above aspects.

\section*{Acknowledgment}
This work was supported by National Natural Science Foundation of China (No.61573277), the Open Projects Program of National Laboratory of Pattern Recognition and The Hong Kong Polytechnic University Research Committee for financial and technical support. The authors would like to thank Ms. L. Yue and Mr. Z. Wu from Xiamen University, and Mr. S. Shi and Dr. K. Shang from Southern University of Science and Technology for kind help and valuable discussions. We also thank the anonymous referees for their insightful comments and helpful suggestions which significantly improve the manuscript's quality.

\ifCLASSOPTIONcaptionsoff
  \newpage
\fi



%
\bibliographystyle{IEEEtran}

\bibliography{Reference}

%





\end{document}